\newcommand{\junk}[1]{}
\newtheorem{lemma}{Lemma}
\newtheorem{theorem}[lemma]{Theorem}
\newtheorem{definition}[lemma]{Definition}
\newtheorem{claim}{Claim}
\newcommand{\firstterm}{\mathsf{FirstTerm}}
\newcommand{\secondterm}{\mathsf{SecondTerm}}
\newcommand{\High}{\mathsf{High}}
\newcommand{\Medium}{\mathsf{Medium}}
\newcommand{\Low}{\mathsf{Low}}
\newcommand{\var}{\mathsf{Var}}
\newcommand{\ent}{\mathsf{H}}
\newcommand{\E}{\mathop{\mathbb E}}
\begin{document}

\title{On Zarankiewicz Problem and Depth-Two Superconcentrators\footnote{A preliminary 
version of this work appeared in the Proceedings of the
23rd International Symposium on Algorithms and Computation, 2012, pp. 257-266.}}

\author{Chinmoy Dutta\thanks{Twitter Inc., San Francisco, USA. 
  email: {\tt chinmoy@twitter.com}.} \\
  \and
  Jaikumar Radhakrishnan\thanks{Tata Institute of Fundamental Research, Mumbai, INDIA. 
  email: {\tt jaikumar@tifr.res.in}.}
}

\date{}
\maketitle

\begin{abstract}
We show tight necessary and sufficient conditions on the sizes of
small bipartite graphs whose union is a larger bipartite graph that
has no large bipartite independent set. Our main result is a common
generalization of two classical results in graph theory: the theorem
of K\H{o}v\'{a}ri, S\'{o}s and Tur\'{a}n on the minimum number of edges in
a bipartite graph that has no large independent set, and the theorem
of Hansel (also Katona and Szemer\'{e}di, Krichevskii) on the sum
of the sizes of bipartite graphs that can be used to construct a graph
(non-necessarily bipartite) that has no large independent set. 

As an application of our results, we show how they unify the underlying 
combinatorial principles developed in the proof of tight lower bounds for 
depth-two superconcentrators.
\end{abstract}

\section{Introduction}
\label{sec:intro}

Consider a bipartite graph $G=(V,W,E)$, where $|V|,|W|=n$. Suppose
every $k$ element subset $S \subseteq V$ is connected to every $k$
element subset $T \subseteq W$ by at least one edge. How many edges
must such a graph have? This is the celebrated Zarankiewicz
problem. 

\begin{definition}[Bipartite independent set]
A bipartite independent set of size $k \times k$ in a bipartite graph
$G=(V,W,E)$ is a pair of subsets $S \subseteq V$ and $T \subseteq W$
of size $k$ each such that there is no edge connecting $S$ and $T$, i.e.,
$(S \times T) \cap E = \emptyset$.
\end{definition}

\noindent The Zarankiewicz problem asks for the minimum number of edges in
a bipartite graph that does not have any bipartite independent set of
size $k \times k$. We may think of an edge as a complete bipartite
graph where each side of the bipartition is just a singleton. This
motivates the following generalization where we consider bipartite
graphs as formed by putting together not just edges, but, more
generally, small complete bipartite graphs.

\begin{definition}
A bipartite graph $G=(V,W,E)$ is said to be a union of complete
bipartite graphs $G_i=(V_i,W_i,E_i=V_i \times W_i)$, $i=1,2,\ldots,
r$, if each $V_i \subseteq V$, each $W_i \subseteq W$, and $E = E_1
\cup \cdots \cup E_r$.
\end{definition}

\begin{definition}
We say that a sequence of positive integers $(n_1,n_2,\ldots,n_r)$ is
$(n,k)$-strong if there is a bipartite graph $G=(V,W,E)$ that is a
union of graphs $G_i=(V_i,W_i, E_i=V_i \times W_i)$, $i=1,2,\ldots,r$,
such that
\begin{itemize}
\item $|V|, |W|=n$;
\item $|V_i|=|W_i|=n_i$;
\item $G$ has no bipartite independent set of size $k \times k$.
\end{itemize}
\end{definition}

\noindent What conditions must the $n_i$'s satisfy for
$(n_1,n_2,\ldots,n_r)$ to be $(n,k)$-strong?  Note that the
Zarankiewicz problem is a special case of this question where each
$n_i$ is $1$ and $\sum_i {n_i}$ corresponds to the number edges in the
final graph $G$.

\paragraph*{\bf Remark.} The Zarankiewicz 
problem is more commonly posed in the following form: What is the
maximum number of edges in a bipartite graph with no $k \times k$
bipartite {\em clique}. By interchanging edges and non-edges, we can
ask for the maximum number of non-edges (equivalently the minimum
number of edges) such that there is no $k \times k$ bipartite {\em
independent set}. This complementary form is more convenient for our
purposes.

\paragraph*{\bf Remark.} Given $n_i$'s, the choice of $G_i$'s also
determines whether or not $G$ has an independent set of a given
size. If $(n_1,n_2,\ldots,n_r)$ is $(n,k)$-strong, it implies that
there is {\em some} choice of $G_i$'s for which the union graph $G$
does not have any independent set of size $k \times k$.

\paragraph*{\bf The K\H{o}v\'{a}ri, S\'{o}s and Tur\'{a}n bound \\}

The following classical theorem gives a lower bound on the number of
edges in a bipartite graph that has no large independent set.

\begin{theorem}[K\H{o}v\'{a}ri, S\'{o}s and Tur\'{a}n~\cite{KST};
see, e.g., \cite{bollobas}, Page 301, Lemma 2.1.] 
\label{thm:kst}
If $G$ does not have an independent set of size $k\times k$, then
\[    n  {{n-{\overline d}} \choose k} {n \choose k}^{-1} \leq k-1, \]
where $\overline{d}$ is the average degree of $G$.
\end{theorem}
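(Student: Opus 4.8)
The plan is to combine a double-counting argument (phrased probabilistically, in the spirit of the paper) with a convexity estimate. The first step is to restate the hypothesis in a convenient form: $G$ has no bipartite independent set of size $k\times k$ if and only if, for every $k$-element subset $T\subseteq W$, the number of vertices $v\in V$ with $N(v)\cap T=\emptyset$ is at most $k-1$ (otherwise those vertices together with $T$ would form the forbidden independent set).

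Next I would sample a $k$-subset $T\subseteq W$ uniformly at random and let $X$ count the vertices $v\in V$ with $N(v)\cap T=\emptyset$. By the reformulation, $X\le k-1$ with probability $1$, so $\E[X]\le k-1$. On the other hand, by linearity $\E[X]=\sum_{v\in V}\Pr[N(v)\cap T=\emptyset]$, and for a vertex $v$ of degree $d(v)$ this probability is exactly $\binom{n-d(v)}{k}\binom{n}{k}^{-1}$, since $T$ must be chosen from the $n-d(v)$ non-neighbours of $v$ (with the usual convention that the binomial vanishes when $d(v)>n-k$). Hence
\[
  \sum_{v\in V}\binom{n-d(v)}{k}\;\le\;(k-1)\binom{n}{k}.
\]
To finish, I would pass to the average degree via Jensen's inequality applied to $x\mapsto\binom{n-x}{k}$: since $\sum_{v}d(v)=\overline d\,n$, convexity gives $\frac1n\sum_{v}\binom{n-d(v)}{k}\ge\binom{n-\overline d}{k}$, and therefore $n\binom{n-\overline d}{k}\le(k-1)\binom{n}{k}$, which is precisely the claimed bound after dividing by $\binom{n}{k}$.

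The step that needs care — and the main obstacle — is this last convexity/Jensen move, because $\overline d$ need not be an integer and the polynomial $\binom{n-x}{k}=\frac1{k!}(n-x)(n-x-1)\cdots(n-x-k+1)$ is not convex on all of $[0,n]$. I would address this by first proving \emph{discrete} convexity of $d\mapsto\binom{n-d}{k}$: two applications of Pascal's identity give $\binom{n-d+1}{k}-2\binom{n-d}{k}+\binom{n-d-1}{k}=\binom{n-d-1}{k-2}\ge0$, valid for all integers $d$ with the vanishing convention. Thus the piecewise-linear interpolant of $d\mapsto\binom{n-d}{k}$ is a genuine convex function agreeing with it at integers, so Jensen applies to that interpolant; and in the regime $\overline d\le n-k+1$ the interpolant's value at $\overline d$ dominates $\binom{n-\overline d}{k}$, since the latter polynomial is convex on that range. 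The remaining boundary case $\overline d>n-k+1$ forces $n$ small relative to $k$ and is handled directly.
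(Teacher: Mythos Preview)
The paper does not actually prove this theorem; it is quoted as a classical result with references to K\H{o}v\'{a}ri--S\'{o}s--Tur\'{a}n and to Bollob\'{a}s, and the text immediately passes to deriving consequences from it. There is therefore no proof in the paper for yours to be compared against.

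For what it is worth, your argument is the standard one: over a uniformly random $k$-subset $T\subseteq W$, the expected number of vertices of $V$ with no neighbour in $T$ is at most $k-1$ by the hypothesis and equals $\sum_{v}{n-d(v)\choose k}{n\choose k}^{-1}$ by direct computation; Jensen then replaces each $d(v)$ by $\overline d$. Your treatment of the convexity step via the discrete second difference ${n-d+1\choose k}-2{n-d\choose k}+{n-d-1\choose k}={n-d-1\choose k-2}\ge 0$ and the piecewise-linear interpolant is correct and appropriately careful. The only quibble is your description of the boundary regime $\overline d>n-k+1$ as forcing ``$n$ small relative to $k$'': it does not (it just means $G$ is very dense), but under the natural convention that ${x\choose k}=0$ for real $x<k$ the left-hand side of the stated inequality vanishes there and the conclusion is trivial, so nothing is lost.
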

\noindent The above theorem implies that 
\begin{eqnarray*} 
n &\leq& (k-1) {{n-{\overline d}} \choose k}^{-1} {n \choose k}
  \leq (k-1) \left(\frac{n-k+1}{n-\overline{d}-k+1}\right)^k \\
  &=&    (k-1) \left(1+\frac{\overline{d}}{n-\overline{d}-k+1}\right)^k
  \leq (k-1) \exp\left(\frac{\overline{d}k}{n-\overline{d}-k+1}\right),
\end{eqnarray*}
which yields,
\[ \overline{d} \geq \frac{(n-k+1)\log (n/(k-1))}{k + \log (n/(k-1))}.\]
In this paper, we will mainly be interested in $k \in [n^{1/10},
  n^{9/10}]$, in which case we obtain
\[  |E(G)| = n \overline{d} = \Omega\left(\frac{n^2}{k} \log n \right). \]
For the problem under consideration, this immediately gives the
necessary condition
\begin{equation}
\label{eq:kst-edges}
 \sum_{i=1}^r n_i^2 = \Omega\left(\frac{n^2}{k} \log n \right). 
\end{equation}
It will be convenient to normalize $n_i$ and define $\alpha_i =
\frac{n_i}{n/k}$. With this notation, the inequality above can be 
restated as follows.
\begin{equation}
\label{eq:kst}
 \sum_{i=1}^r \alpha_i^2 = \Omega(k \log n).
\end{equation}

\paragraph*{\bf The Hansel bound \\} 

The same question can also be asked in the context of general
graphs. In that case, we have the following classical theorem.

\begin{theorem}[Hansel~\cite{Hansel}, Katona and Szemer\'{e}di~\cite{KSz}, Krichevskii~\cite{Krichevskii}]
\label{thm:hansel}
Suppose it is possible to place one copy each of $K_{n_i,n_i}$,
$i=1,2,\ldots,r$, in a vertex set of size $n$ such that the
resulting graph has no independent set of size $k$. Then,
\[ \sum_{i=1}^r n_i \geq n \log \left( \frac{n}{k-1} \right). \]
\end{theorem}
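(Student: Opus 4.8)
The plan is to prove this by the probabilistic method, in the spirit of the argument of Katona and Szemer\'{e}di. Write $X$ for the $n$-element ground set and, for each $i$, let $V_i,W_i\subseteq X$ be the two disjoint sides of the $i$-th copy of $K_{n_i,n_i}$, so that the union graph $G$ on $X$ has edge set $\bigcup_i (V_i\times W_i)$ and, by hypothesis, has no independent set of size $k$. For each $i$ I would flip an independent fair coin that \emph{selects} one of $V_i, W_i$, and then set
\[
 S \;=\; \bigl\{\, v\in X \ :\ \text{for every } i \text{ with } v\in V_i\cup W_i,\ v \text{ lies in the \emph{un}selected side of } K_{n_i,n_i} \,\bigr\}.
\]
The first point to verify is that $S$ is always independent in $G$: every edge of $G$ is an edge of some $K_{n_i,n_i}$, hence a pair $v\in V_i$, $w\in W_i$; if both lay in $S$, the side selected for $K_{n_i,n_i}$ would have to avoid $v$ (forcing it to be $W_i$) and avoid $w$ (forcing it to be $V_i$) at the same time, which is impossible.

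Next I would take expectations. Let $d(v)$ be the number of indices $i$ with $v\in V_i\cup W_i$. The $d(v)$ coins relevant to a fixed $v$ are independent, so $\Pr[v\in S]=2^{-d(v)}$ and hence $\E[\,|S|\,]=\sum_{v\in X}2^{-d(v)}$. Since $G$ has no independent set of size $k$, $|S|\le k-1$ holds with probability $1$, so $\sum_{v\in X}2^{-d(v)}\le k-1$. Convexity of $t\mapsto 2^{-t}$ (AM--GM, or Jensen) gives $\sum_{v\in X}2^{-d(v)}\ge n\cdot 2^{-\bar d}$ with $\bar d=\tfrac1n\sum_v d(v)$; comparing the two inequalities yields $2^{-\bar d}\le (k-1)/n$, i.e.\ $\bar d\ge\log(n/(k-1))$. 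Double-counting gives $\sum_{v\in X}d(v)=\sum_i|V_i\cup W_i|=\sum_i(|V_i|+|W_i|)=2\sum_i n_i$, so $2\sum_i n_i\ge n\log(n/(k-1))$, which is the asserted bound on $\sum_i n_i$ (taking the logarithm to base $2$, and up to the way one normalises the ``size'' of $K_{n_i,n_i}$).

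The entire content of the proof is the construction of $S$ in the first step --- the observation that independently discarding a random side of each $K_{n_i,n_i}$ leaves behind an independent set of expected size $\sum_v 2^{-d(v)}$; everything after that is a one-line convexity estimate. I therefore do not expect a genuine obstacle, only routine care in the degenerate regimes (some $d(v)=0$, or $k$ close to $n$), where the inequalities stay valid but become essentially trivial. As a remark, the same bound can be recovered information-theoretically (in the spirit of Krichevskii): encode $v\in X$ by the string that records, for each $i$, which side of $K_{n_i,n_i}$ (if any) contains $v$; the absence of a large independent set forces this encoding to be sufficiently spread out, and an entropy computation gives the result.
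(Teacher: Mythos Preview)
The paper does not supply its own proof of this theorem; it is stated as a classical result of Hansel and Katona--Szemer\'edi and used only to motivate the form of the main results, so there is no in-paper argument to compare against. Your proof is exactly the standard Katona--Szemer\'edi random-deletion argument and is correct.

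The one wrinkle, which you already flag, is the constant: the double-count gives $\sum_v d(v)=\sum_i |V_i\cup W_i|=2\sum_i n_i$, so what the argument actually yields is $\sum_i n_i \ge \tfrac{n}{2}\log_2\bigl(n/(k-1)\bigr)$ rather than $n\log\bigl(n/(k-1)\bigr)$ as printed. This is a normalisation issue (whether one measures the $i$-th piece by $n_i$ or by $|V_i|+|W_i|=2n_i$, and which base of logarithm is intended) and is irrelevant for the paper, which only invokes the bound up to constant factors.
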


\noindent Although this result pertains to general graphs and is not
directly applicable to the bipartite graph setting, it can be used
(details omitted as we will use this bound only to motivate our
results, not to derive them) to derive a necessary condition for
bipartite graphs as well. In particular, normalizing $n_i$ by setting
$n_i = \alpha_i\frac{n}{k}$ as before, one can obtain the necessary
condition
\begin{equation}
\label{eq:hansel}
 \sum_{i=1}^r \alpha_i = \Omega(k \log n).
\end{equation}

Note that neither of the two bounds above strictly dominates the
other: if all $\alpha_i$ are small (say $\ll 1$), then the first
condition derived from the K\H{o}v\'{a}ri, S\'{o}s and Tur\'{a}n bound
is stronger, wheras if all $\alpha_i$ are large ($\gg 1$), then the
condition derived from the Hansel bound is stronger.

In our applications, we will meet situations where the $\alpha_i$'s
will not be confined to one or the other regime. To get optimal
results, one must, therefore, devise a condition appropriate for the
entire range of values for the $\alpha_i$'s. Towards this goal, we
start by trying to guess the form of this general inequality by asking
a dual question: what is a sufficient condition on $n_i$'s
(equivalently $\alpha_i$'s) for $(n_1,n_2,\ldots,n_r)$ to be
$(n,k)$-strong? We derive the following (proof omitted).

\begin{theorem}[Sufficient condition]
\label{thm:sufficient.symmetric}
Suppose $k \in [n^{1/10}, n^{9/10}]$, and let $\alpha_i \in
[n^{-1/100}, n^{1/100}]$, $i=1,2\ldots,r$. Then, there is a constant $A > 0$
such that if
\[ \sum_{i: \alpha_i \leq 1} \alpha_i^2 + \sum_{i: \alpha_i > 1} \alpha_i \geq A k \log n,\]
then $(n_1,n_2,\ldots,n_r)$ is $(n,k)$-strong, where $n_i = \alpha_i
(n/k)$.
\end{theorem}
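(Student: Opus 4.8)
The plan is a direct application of the probabilistic method: construct $G$ by choosing its blocks at random, and show that with positive probability $G$ has no $k\times k$ bipartite independent set. Put $n_i=\lfloor\alpha_i n/k\rfloor$; the hypotheses $k\in[n^{1/10},n^{9/10}]$ and $\alpha_i\in[n^{-1/100},n^{1/100}]$ force $n^{0.09}\le n_i\le n^{0.91}$, so each $n_i$ is a genuine positive integer much smaller than $n$ (in particular $n_i+k\le n/2$ once $n$ is large), and the rounding is harmless. For each $i$ independently, let $V_i$ be a uniformly random $n_i$-subset of $V$ and $W_i$ a uniformly random $n_i$-subset of $W$, chosen independently of one another, and set $G=\bigcup_{i=1}^r G_i$ with $G_i=(V_i,W_i,V_i\times W_i)$. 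Then $G$ has the prescribed vertex counts and block sizes automatically, so only the independent-set condition remains to be verified.

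Fix a candidate bad pair $S\subseteq V$ and $T\subseteq W$ with $|S|=|T|=k$. Then $(S,T)$ is independent in $G$ iff every block misses it, i.e. $V_i\cap S=\emptyset$ or $W_i\cap T=\emptyset$ for every $i$. Writing $p_i:=\Pr[V_i\cap S=\emptyset]=\binom{n-k}{n_i}\binom{n}{n_i}^{-1}=\prod_{j=0}^{n_i-1}\bigl(1-\tfrac{k}{n-j}\bigr)$, independence of $V_i$ and $W_i$ gives $\Pr[G_i\text{ misses }(S,T)]=2p_i-p_i^2=1-(1-p_i)^2=:q_i$, and independence of the blocks gives $\Pr[(S,T)\text{ independent in }G]=\prod_{i=1}^r q_i$. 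A routine estimate of this product, using $n/2\le n-j\le n$ for all $0\le j\le n_i-1$ together with $1-x\le e^{-x}$ and $-\log(1-x)\le x/(1-x)$, gives
\[
e^{-2\alpha_i}\ \le\ p_i\ \le\ e^{-\alpha_i}.
\]

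These turn into the bound on $q_i$ with the right dependence on $\alpha_i$ in each regime. For $\alpha_i\le 1$: $1-p_i\ge 1-e^{-\alpha_i}\ge\alpha_i/2$, so $q_i=1-(1-p_i)^2\le e^{-\alpha_i^2/4}$. For $\alpha_i>1$: now $p_i\le e^{-1}<1/2$, so $q_i\le 2p_i\le 2e^{-\alpha_i}$, and a short case split on whether $\alpha_i$ exceeds a suitable absolute constant (for large $\alpha_i$ the factor $2$ is absorbed into the exponent; for $\alpha_i$ close to $1$ use instead $q_i\le\exp(-(1-p_i)^2)$ with $(1-p_i)^2\ge(1-e^{-1})^2$) yields $q_i\le e^{-c'\alpha_i}$ for an absolute constant $c'>0$. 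Multiplying over $i$ and invoking the hypothesis,
\[
\Pr[(S,T)\text{ independent in }G]=\prod_{i=1}^r q_i\ \le\ \exp\!\left(-c\Big(\sum_{i:\alpha_i\le 1}\alpha_i^2+\sum_{i:\alpha_i>1}\alpha_i\Big)\right)\ \le\ \exp(-cAk\log n),
\]
with $c=\min(1/4,c')$ an absolute constant.

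Finally I would take a union bound over the at most $\binom{n}{k}^2\le\exp\bigl(2k\log(en/k)\bigr)\le\exp(4k\log n)$ pairs $(S,T)$ (the last step uses $en/k\le n^2$, valid since $k\ge n^{1/10}$). Choosing $A$ with $cA>4$ makes the probability that some $k\times k$ bipartite independent set exists strictly below $1$ — in fact $o(1)$ as $n\to\infty$ — so a graph $G$ of the required shape with no such independent set exists, i.e. $(n_1,\dots,n_r)$ is $(n,k)$-strong, and the constant $A$ so obtained is absolute. I expect the only point needing real care to be the uniform bound $q_i\le e^{-c'\alpha_i}$ for all $\alpha_i\in(1,n^{1/100}]$: for large $\alpha_i$ one has $q_i\approx 2e^{-\alpha_i}$, but for $\alpha_i$ just above $1$ neither that approximation nor the small-$\alpha_i$ expansion is sharp, so the bound must be pieced together from a couple of cases as above; the estimate of $p_i$, the product over $i$, and the union bound are all routine.
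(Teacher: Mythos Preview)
Your proposal is correct and follows essentially the same approach as the paper: random placement of the blocks, the estimate $\Pr[V_i\cap S=\emptyset]\le e^{-\alpha_i}$, the two-regime bound on $q_i=1-(1-p_i)^2$, and a union bound over the $\binom{n}{k}^2$ pairs. One small simplification from the paper: for $\alpha_i>1$ you do not need a case split, since $q_i\le 2e^{-\alpha_i}=e^{-\alpha_i+\ln 2}\le e^{-(1-\ln 2)\alpha_i}$ holds for every $\alpha_i\ge 1$, giving $c'=1-\ln 2$ directly.
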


\noindent We might ask if this sufficient condition is also
necessary. The K\H{o}v\'{a}ri, S\'{o}s and Tur\'{a}n bound (Inequality
~\ref{eq:kst}) explains the first term in the LHS of the above
sufficient condition, and the Hansel bound
(Inequality~\ref{eq:hansel}) explains the second term. We thus have
explanations for both the terms using two classical theorems of graph
theory. However, neither of them implies in full generality that the
sufficient condition derived above is necessary. In this work, we show
that this sufficient condition is indeed also necessary upto
constants.

\begin{theorem}[Necessary condition]
\label{thm:necessary.symmetric}
Suppose $k \in [n^{1/10}, n^{9/10}]$, and let $\alpha_i \in
[n^{-1/100}, n^{1/100}]$, $i=1,2\ldots,r$. Then, there is a constant
$B > 0$ such that if $(n_1,n_2,\ldots,n_r)$ is $(n,k)$-strong where $n_i =
\alpha_i (n/k)$, then
\[ \sum_{i: \alpha_i \leq 1} \alpha_i^2 + \sum_{i: \alpha_i > 1} \alpha_i \geq B k \log n.\]
\end{theorem}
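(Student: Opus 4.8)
I would argue the contrapositive. Fix a small constant $B>0$, suppose $\sum_{i:\alpha_i\le 1}\alpha_i^2+\sum_{i:\alpha_i>1}\alpha_i<Bk\log n$, and exhibit, in \emph{every} bipartite graph $G=(V,W,E)$ with $|V|=|W|=n$ that is a union of the $G_i$'s (so $|V_i|=|W_i|=n_i=\alpha_i(n/k)$), a bipartite independent set of size $k\times k$; this shows $(n_1,\dots,n_r)$ is not $(n,k)$-strong. Call $G_i$ \emph{small} when $\alpha_i\le 1$ and \emph{large} otherwise, with index sets $\mathcal S$ and $\mathcal L$. It suffices to find a $k$-set $S\subseteq V$ with $|W\setminus N(S)|\ge k$, where $N(S)=\bigcup_{i:\,V_i\cap S\ne\emptyset}W_i$; any $k$-subset $T\subseteq W\setminus N(S)$ then satisfies $(S\times T)\cap E=\emptyset$.

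The small graphs should be handled by a first-moment argument, essentially the bound of Theorem~\ref{thm:kst} read for a random $k$-set. For a uniform random $k$-subset $S$ and $V_w^{\mathcal S}:=\bigcup_{i\in\mathcal S:\,w\in W_i}V_i$ one has $\E\,|W\setminus N^{\mathcal S}(S)|=\sum_w\Pr[S\cap V_w^{\mathcal S}=\emptyset]=\sum_w\binom{n-|V_w^{\mathcal S}|}{k}\binom nk^{-1}$, and since $\sum_w|V_w^{\mathcal S}|\le\sum_{i\in\mathcal S}n_i|W_i|=(n/k)^2\sum_{i\in\mathcal S}\alpha_i^2<B(n^2/k)\log n$, Markov's inequality leaves at least $n/2$ of the $w$ with $|V_w^{\mathcal S}|=O(B(n/k)\log n)$; convexity then gives $\E\,|W\setminus N^{\mathcal S}(S)|\ge n^{1-O(B)}$, which exceeds $k$ for $B$ a small enough constant, and one sees moreover that for a typical $S$ the small graphs already leave $n^{1-O(B)}$ vertices of $W$ uncovered. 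This step by itself only reproves $\sum_i\alpha_i^2=\Omega(k\log n)$, which does not imply the theorem once many $\alpha_i$ exceed $1$, so the large graphs need a genuinely different treatment.

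For the large graphs the estimate $|V_w^{\mathcal L}|\le\sum_{i\in\mathcal L:\,w\in W_i}n_i$ is too lossy — it reintroduces a factor $\alpha_i$ per incidence and only re-derives the $\sum\alpha_i^2$ bound — so instead I would split them into $O(\log n)$ dyadic scales $\mathcal L_j=\{i:2^j\le\alpha_i<2^{j+1}\}$, with $\sum_j 2^j|\mathcal L_j|\le\sum_{i\in\mathcal L}\alpha_i<Bk\log n$, and bound the coverage of each scale by a \emph{structured} (not random) choice of the part of $S$ used for it. A ``light'' scale, one with $2^j|\mathcal L_j|\ll k/\log n$, has $\sum_{i\in\mathcal L_j}|W_i|=O(2^j|\mathcal L_j|\cdot n/k)=o(n/\log n)$ and so covers a vanishing fraction of $W$ however it is placed. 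For a ``heavy'' scale one should place the elements of $S$ inside as few of the sets $\{V_i\}_{i\in\mathcal L_j}$ as possible — clustering them into the largest few such $V_i$, or a single one when $n_i\ge k$ — which, when the $V_i$ of the scale are spread out, forces $S$ to meet only $O(\max(1,k^2/(2^j n)))$ graphs of that scale, so that scale adds only a controlled number of vertices ($O(k)$, or at most some $n_i=o(n)$) to $N(S)$; configurations in which the $V_i$ of a scale overlap heavily are even easier, since an $S$ avoiding their common part meets none of them. Summing over the $O(\log n)$ scales, and choosing $S$ compatibly with the small-graph estimate (e.g.\ by randomizing $S$ within the structural constraints the large scales impose), one would get $|N(S)|\le o(n)+O(k\log n)\le n-k$, which produces the $k\times k$ independent set.

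The hard part is exactly the heavy-scale analysis for \emph{worst-case partial overlaps}: a placement is free to interpolate between ``the $V_i$ of a scale are pairwise disjoint'' and ``they share a large common core,'' and one must show that in every intermediate regime a $k$-set $S$ can still be chosen meeting few large graphs of small total $W$-mass, and that this $S$ does not spoil the small-graph bound. This is precisely the ingredient the union bound over pairs $(S,T)$ from the sufficiency proof (Theorem~\ref{thm:sufficient.symmetric}) cannot supply for an adversarial placement; carrying it out carefully, and balancing the various error terms to pin down $B$, is where the real work lies, and it is also where the weight $\sum_{i:\alpha_i>1}\alpha_i$, rather than $\sum\alpha_i^2$, genuinely enters.
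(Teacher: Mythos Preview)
Your plan diverges from the paper's proof at the crucial step, and the place you yourself flag as ``the hard part'' is a genuine gap rather than a routine calculation. The paper does not try to \emph{choose} a clever $k$-set $S$ that dodges the large $G_i$'s; instead it kills the large graphs outright by a random deletion: for each $i$ with $\alpha_i>1$, independently delete all of $V_i$ or all of $W_i$, each with probability $1/2$. A vertex $v\in V$ survives with probability $2^{-d_v}$ where $d_v=|\{i\in\mathcal L:v\in V_i\}|$, and since $\sum_v d_v=\sum_{i\in\mathcal L}n_i=(n/k)\sum_{i\in\mathcal L}\alpha_i\le (B/100)\,n\log n$, a Markov/Chebyshev argument yields surviving sets $X\subseteq V$, $Y\subseteq W$ of size $\ge\frac14 n\cdot n^{-O(B)}$. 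By construction \emph{no} large $G_i$ contributes any edge between $X$ and $Y$, so only the small $G_i$'s remain, and now the K\H{o}v\'{a}ri--S\'{o}s--Tur\'{a}n bound (your first-moment step) applied to the induced graph on $X\times Y$ forces $\sum_{i\in\mathcal S}\alpha_i^2=\Omega(k\log n)$. This is exactly where the linear weight $\sum_{i:\alpha_i>1}\alpha_i$ enters: it controls $\sum_v d_v$ and hence the survival probability, playing the role of the Hansel/Katona--Szemer\'{e}di argument.

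Your dyadic-scale scheme, by contrast, asks for a single $k$-set $S$ that is simultaneously ``clustered'' with respect to every heavy scale and still generic enough for the small-graph first-moment bound. You do not show these constraints are compatible, and in fact a naive count shows the difficulty: even restricting to vertices $v$ with $d_v\le 2B\log n$, a $k$-set $S$ of such vertices can meet up to $2Bk\log n$ large graphs, each with $|W_i|$ as large as $n^{1/100}(n/k)$, giving a potential $W$-coverage of order $n^{1+1/100}\log n\gg n$; your per-scale clustering is meant to beat this, but the structural choices at different scales can conflict when the families $\{V_i\}_{i\in\mathcal L_j}$ overlap across $j$, and nothing in the plan rules this out. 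The paper's side-deletion trick sidesteps the whole issue by removing the large graphs from the picture rather than trying to thread $S$ through them.
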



\noindent Our proof of Theorem~\ref{thm:necessary.symmetric} uses
refinement of the ideas used by Radhakrishnan and Ta-Shma~\cite{RT} 
for proving an optimal lower bound on the size of depth-two superconcentrators.
A tradeoff result for depth-two superconcentrators was shown by Dutta
and Radhakrishnan~\cite{DR}. Their main argument leads one to consider
situations where the small bipartite graphs used to build the bigger
one are not symmetric, instead of being of the form $K_{n_i,n_i}$,
they are of the form $K_{m_i,n_i}$ (with perhaps $m_i \neq n_i$).

\begin{definition}
We say that a sequence $((m_1,n_1),(m_2,n_2),\ldots,(m_r,n_r))$ of
pairs of positive integers is $(n,k)$-strong if there is a bipartite
graph $G=(V,W,E)$ that is a union of graphs $G_i=(V_i,W_i, E_i=V_i
\times W_i)$, $i=1,2,\ldots,r$, such that
\begin{itemize}
\item $|V|, |W|=n$;
\item $|V_i|=m_i$ and $|W_i|=n_i$.
\item $G$ has no bipartite independent set of size $k \times k$.
\end{itemize}
\end{definition}

\noindent We refine our lower bound argument for the symmetric case
and provide necessary condition for this asymmetric setting as
well. 

\begin{theorem}[Necessary condition: asymmetric case]
\label{thm:necessary.asymmetric}
Suppose $\alpha_i, \beta_i \in [n^{-1/100}, n^{1/100}]$,
$i=1,2\ldots,r$, and $k \in [n^{1/10}, n^{9/10}]$. Then, there is a
constant $C > 0$ such that if the sequence
$((m_1,n_1),(m_2,n_2),\ldots,(m_r,n_r))$ is $(n,k)$-strong where $m_i
= \alpha_i (n/k)$ and $n_i = \beta_i (n/k)$, then
\[ \sum_{i \in X} \alpha_i \beta_i + \sum_{i \in \{1,2,\ldots,r\} \setminus X} (\alpha_i + \beta_i) \ent(p_i) \geq C k \log n \]
for every $X \subseteq \{1,2,\ldots,r\}$, where $p_i =
\frac{\alpha_i}{\alpha_i + \beta_i}$ and $\ent(p_i) = -p_i \log(p_i) -
(1-p_i) \log(1-p_i)$.
\end{theorem}

\noindent Our proofs of the above mentioned necessary conditions uses
uses techniques developed for analyzing depth-two superconcentrators~\cite{RT,DR}
which are important combinatorial objects useful in both algorithms and
complexity (formal definition in Section~\ref{sec:superconcentrator}). 
As applications of our lower bounds, we provide modular proofs of two
known lower bounds for depth-two superconcentrators:
the first one is a lower bound on the number of edges in such graphs
(Theorem~\ref{thm:superconcentrator.lb}) first shown in~\cite{RT} which we
reprove here using Theorem~\ref{thm:necessary.symmetric} in
Section~\ref{sec:superconcentrator}; the second
one is a tradeoff result between the number of edges at the two
levels of such graphs (Theorem~\ref{thm:superconcentrator.tradeoff}) 
first shown in~\cite{DR} which we reprove here using 
Theorem~\ref{thm:necessary.asymmetric} in Section~\ref{sec:superconcentrator}.
These results were stated in a preliminary version of this work~\cite{DR-Zarankiewicz}.

\junk{
\begin{theorem}[Sufficient condition: asymmetric case]
\label{thm:sufficient.asymmetric}
Suppose $k \in [n^{1/10}, n^{9/10}]$, and let $\alpha_i, \beta_i \in
[n^{1/100}, n^{-1/100}]$, $i=1,2\ldots,r$.  Then, there is a constant
$C > 0$ such that if
\[ \sum_{i \in X} \alpha_i \beta_i + \sum_{i \in \{1,2,\ldots,r\} \setminus X} (\alpha_i + \beta_i) \ent(p_i) \geq C k \log n \]
for every $X \subseteq \{1,2,\ldots,r\}$, where $p_i =
\frac{\alpha_i}{\alpha_i + \beta_i}$ and $\ent(p_i) = -p_i \log(p_i) -
(1-p_i) \log(1-p_i)$ is the binary entropy function, then the sequence
$((m_1,n_1),(m_2,n_2),\ldots,(m_r,n_r))$ is $(n,k)$-strong where $m_i
= \alpha_i (n/k)$, $n_i = \beta_i (n/k)$.
\end{theorem}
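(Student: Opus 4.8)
The plan is to prove Theorem~\ref{thm:sufficient.asymmetric} by the probabilistic method, mirroring the structure that proves Theorem~\ref{thm:sufficient.symmetric} but accounting for the asymmetry $m_i \neq n_i$. For each $i$, I place the vertex set $V_i$ of size $m_i$ by choosing each vertex of $V$ independently with probability $p = m_i/n = \alpha_i/k$, and independently choose $W_i$ by including each vertex of $W$ with probability $q = n_i/n = \beta_i/k$; then I add the complete bipartite graph $V_i \times W_i$. (One does this for several independent ``rounds'' of each type, or equivalently takes a Poisson-distributed number of copies — the exact bookkeeping is routine.) Fix a candidate bipartite independent set $(S,T)$ with $|S| = |T| = k$. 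The graph $G_i$ fails to kill $(S,T)$ — i.e.\ contributes no edge inside $S \times T$ — precisely when $V_i \cap S = \emptyset$ or $W_i \cap T = \emptyset$, which happens with probability $(1-p)^k + (1-q)^k - (1-p)^k(1-q)^k$. Using $(1-p)^k \approx e^{-pk} = e^{-\alpha_i}$ and similarly $e^{-\beta_i}$, the probability that $G_i$ does \emph{not} cover $(S,T)$ is $\approx 1 - (1-e^{-\alpha_i})(1-e^{-\beta_i})$, so the probability it \emph{does} cover is $\approx (1-e^{-\alpha_i})(1-e^{-\beta_i})$. Hence the probability that \emph{no} $G_i$ covers $(S,T)$ is $\prod_{i=1}^r \left(1 - (1-e^{-\alpha_i})(1-e^{-\beta_i})\right)$, and by a union bound over the at most $\binom{n}{k}^2 \leq \exp(2k\log n)$ choices of $(S,T)$, it suffices to show this product is less than $\exp(-2k\log n)$, i.e.
\[ \sum_{i=1}^r -\log\!\left(1 - (1-e^{-\alpha_i})(1-e^{-\beta_i})\right) \;\geq\; (2 + o(1))\, k \log n. \]

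The crux is therefore to show that the hypothesis of the theorem — the family of inequalities $\sum_{i \in X} \alpha_i\beta_i + \sum_{i \notin X} (\alpha_i + \beta_i)\ent(p_i) \geq Ck\log n$ for \emph{all} $X$ — implies the single ``true'' inequality $\sum_i f(\alpha_i,\beta_i) \geq \Omega(k\log n)$ where $f(\alpha,\beta) = -\log(1 - (1-e^{-\alpha})(1-e^{-\beta}))$. The point of quantifying over all $X$ is that $f(\alpha,\beta)$ behaves in two qualitatively different regimes: when both $\alpha_i,\beta_i$ are small, $f(\alpha_i,\beta_i) \approx \alpha_i\beta_i$ (the product term), whereas when one is large and the other small — say $\beta_i$ small, $\alpha_i$ large — we get $1 - e^{-\alpha_i} \approx 1$ so $f(\alpha_i,\beta_i) \approx -\log(e^{-\beta_i}) = \beta_i$, and more generally in the mixed range $f$ is governed by a quantity comparable to $(\alpha_i + \beta_i)\ent(\alpha_i/(\alpha_i+\beta_i))$. (When both are large, $f$ is large — roughly $e^{-\min(\alpha_i,\beta_i)}$-close to... actually $f \to \infty$, so such terms are even more helpful.) So the plan is: partition $\{1,\dots,r\}$ according to which term is the dominant lower bound for $f(\alpha_i,\beta_i)$, take $X$ to be the set of indices where $\alpha_i\beta_i$ dominates, and use a case analysis (with the constraints $\alpha_i,\beta_i \in [n^{-1/100}, n^{1/100}]$ controlling the error terms in the approximations $(1-p)^k = e^{-\alpha_i + O(\alpha_i^2/k)}$, etc.) to show $f(\alpha_i,\beta_i) \geq c \cdot (\text{the }i\text{-th summand in the hypothesis for that }X)$ termwise, for an absolute constant $c > 0$. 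Summing and choosing $A$ (hence $C$) large enough then gives the bound.

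I expect the main obstacle to be exactly this last reduction: verifying the termwise inequality $f(\alpha,\beta) = -\log(1-(1-e^{-\alpha})(1-e^{-\beta})) \gtrsim \min\!\big(\alpha\beta,\ (\alpha+\beta)\ent(\alpha/(\alpha+\beta))\big)$ cleanly across the full parameter range, and confirming that the ``worst'' choice of $X$ in the hypothesis is indeed the one suggested by the pointwise behavior of $f$ (so that no cross-terms between the two regimes are lost). One must also be careful that the approximation $(1-m_i/n)^k \approx e^{-\alpha_i}$ is two-sided with multiplicative error $1 + o(1)$ uniformly — this is where $k \leq n^{9/10}$ and $\alpha_i \leq n^{1/100}$ enter, ensuring $\alpha_i^2/k = o(1)$ — and that the Poisson/independent-rounds construction really does produce graphs of the exact prescribed sizes $m_i, n_i$ (a standard but slightly fiddly concentration argument, handled by allowing a few extra copies and discarding the negligible probability bad events). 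Once the pointwise estimate on $f$ and the uniform exponential approximation are in hand, the union bound and the choice of constant are immediate.
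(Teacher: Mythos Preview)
The paper does not actually contain a proof of Theorem~\ref{thm:sufficient.asymmetric}: Section~\ref{sec:asymmetric} only proves the necessary direction (Theorem~\ref{thm:necessary.asymmetric}), so there is nothing to compare against directly. That said, your proposed argument has a genuine gap at exactly the place you flagged as the crux.

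The termwise inequality you need, namely
\[
f(\alpha,\beta) \;=\; -\log\bigl(1-(1-e^{-\alpha})(1-e^{-\beta})\bigr)
\;\gtrsim\; \min\bigl(\alpha\beta,\ (\alpha+\beta)\,\ent(p)\bigr),
\]
is \emph{false} in the lopsided regime where one parameter is large and the other small. Take $\alpha$ large and $\beta$ small (say $\alpha=n^{1/100}$, $\beta=n^{-1/100}$, both allowed). Then $1-e^{-\alpha}\approx 1$ and $1-e^{-\beta}\approx\beta$, so $f(\alpha,\beta)\approx\beta$. On the other hand
\[
(\alpha+\beta)\,\ent(p)\;=\;\alpha\log\!\frac{\alpha+\beta}{\alpha}+\beta\log\!\frac{\alpha+\beta}{\beta}\;\approx\;\beta+\beta\log(\alpha/\beta)\;=\;\beta\bigl(1+\log(\alpha/\beta)\bigr),
\]
while $\alpha\beta$ is even larger (since $\alpha>1+\log(\alpha/\beta)$ here). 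Hence the minimum is $\approx\beta\log(\alpha/\beta)$, and
\[
\frac{f(\alpha,\beta)}{\min\bigl(\alpha\beta,(\alpha+\beta)\ent(p)\bigr)}
\;\approx\;\frac{1}{\log(\alpha/\beta)}\;=\;\Theta\!\left(\frac{1}{\log n}\right).
\]
So your ``choose $X$ to be the indices where $\alpha_i\beta_i$ dominates, then compare termwise'' step loses a full $\log n$ factor. Concretely, if every $i$ has $(\alpha_i,\beta_i)=(n^{1/100},n^{-1/100})$ and $r$ is chosen so that the hypothesis holds with equality, one gets $\sum_i f(\alpha_i,\beta_i)=\Theta(k)$, not $\Theta(k\log n)$, and the union bound over $\binom{n}{k}^2$ pairs fails. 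Your intuition that in the ``mixed range $f$ is governed by $(\alpha+\beta)\ent(p)$'' is exactly the step that breaks: in that range $f$ is governed by $\min(\alpha,\beta)$, which is smaller by a logarithmic factor.

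This means the straight random placement plus union bound does \emph{not} suffice in the asymmetric case as it did in the symmetric one; either a non-uniform placement of the lopsided pieces or an argument that goes beyond the first-moment/union-bound combination is needed. The paper gives no hint of how it intended to handle this, so you are not missing a reference---but the plan as written does not close.
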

}

\section{Building a Bipartite Graph from Smaller Symmetric Bipartite Graphs}
\label{sec:symmetric}

\paragraph*{\bf Proof of Theorem~\ref{thm:sufficient.symmetric} \\}

Let us consider a probabilistic construction of a bipartite graph
$G=(V,W,E)$ where, given $n_1, n_2, \ldots, n_r$ such that $\alpha_i =
\frac{n_i}{n/k} \in [n^{-1/100}, n^{1/100}]$, we place an
independently drawn random copy $G_i$ of $K_{n_i,n_i}$ between $V$ and
$W$. In other words, $G$ is a union of $G_1,G_2,\ldots,G_r$ where
$G_i = (V_i,W_i, E_i=V_i \times W_i)$ and $V_i$, $W_i$ are uniformly
chosen random $n_i$ element subsets of $V$ and $W$ respectively. Fix a
potential independent set $(S,T)$ of size $k \times k$. Then, as shown
below,
\begin{equation}
\label{eq:basic}
p_i = \Pr[E(G_i) \cap S \times T = \emptyset] \leq 1 - (1 - \exp(-\alpha_i))^2. 
\end{equation}

\noindent Thus, since the graphs $G_i$'s are chosen independently, the
probability that $(S,T)$ is independent in $G$ is
\[ p = \Pr[(S,T) \mbox{ is independent in } G] \leq \prod_{i=1}^r (1- (1 - \exp(-\alpha_i))^2).\]
By the union bound, if $p {n \choose k}^2 < 1$ then there is bipartite
graph built by putting together one copy each of $K_{n_i,n_i}$ that
avoids all independent sets of size $k \times k$. The interesting
aspect of this calculation is in the form of the expression $p_i=1- (1 -
\exp(-\alpha_i))^2$.  We will show below that
\begin{equation}
\label{eq:pi}
p_i \leq \left\{ \begin{array}{l l}
                        \exp\left(-\frac{\alpha_i^2}{3}\right) & \mbox{if $\alpha_i \leq 1$} \\
                        \exp(- (1-\ln 2) \alpha_i) & \mbox{if $\alpha_i > 1$}
                       \end{array} \right. 
\end{equation}

\noindent This immediately gives us our first result, the sufficient
condition stated as Theorem~\ref{thm:sufficient.symmetric} in the
introduction.

\paragraph{Proof of (\ref{eq:basic}):} Recall that $G_i=(V_i, W_i, V_i \times W_i)$, 
where $V_i$ and $W_i$ are uniformly chosen random subsets of $V$ and
$W$ of size $n_i$ each.
\begin{eqnarray*} 
\Pr[E(G_i) \cap S \times T = \emptyset] &=& \Pr[V_i \cap S = \emptyset \vee W_i \cap T = \emptyset] \\
            &=& 1 - (1-\Pr[V_i \cap S = \emptyset])(1-\Pr[W_i \cap T = \emptyset]).
\end{eqnarray*}
Then, (\ref{eq:basic}) follows from this because
\begin{eqnarray*}
\Pr[V_i \cap S = \emptyset],  \Pr[W_i \cap T = \emptyset] 
                            & = & {n-k \choose n_i}{n \choose n_i}^{-1} \\
                            & \leq & \left(\frac{n-k}{n}\right)^{n_i} \\
                            & \leq & \exp\left(-\frac{kn_i}{n}\right) \\
                            & = & \exp(-\alpha_i).
\end{eqnarray*}

\paragraph{Proof of (\ref{eq:pi}):} We have
\[ p_i = 1 - (1 - \exp(-\alpha_i))^2 = \exp(-\alpha_i)(2 - \exp(-\alpha_i)).\]
If $\alpha_i \leq 1$, then we have
\[ 2 - \exp(-\alpha_i) \leq 1 + \alpha_i -\frac{\alpha_i^2}{2} + \frac{\alpha_i^3}{6}
                  \leq \exp\left(\alpha_i - \frac{\alpha_i^2}{3}\right).\]
Thus,
\[ p_i = \exp(-\alpha_i)(2 - \exp(-\alpha_i)) \leq \exp\left(-\frac{\alpha_i^2}{3}\right).\]
If $\alpha_i > 1$, then we have
\begin{eqnarray*} 
p_i &=& \exp(-\alpha_i)(2 - \exp(-\alpha_i)) 
\leq 2 \exp(-\alpha_i) \\
&=& \exp(-\alpha_i + \ln 2) 
\leq \exp(- (1 -\ln 2)\alpha_i).
\end{eqnarray*}

\paragraph*{\bf Proof of Theorem~\ref{thm:necessary.symmetric} \\}

Let $k \in [n^{\frac{1}{10}},n^{\frac{9}{10}}]$ and $\alpha_i \in
[n^{-1/100},n^{1/100}]$, $i=1,2,\ldots,r$. Suppose we are given a
bipartite graph $G=(V,W,E)$ which is a union of complete bipartite
graphs $G_1,G_2,\ldots,G_r$ and has no bipartite independent set of
size $k \times k$, where $G_i = (V_i,W_i,E_i=V_i \times W_i)$ with
$|V_i| = |W_i| = n_i = \alpha_i (n/k)$. We want to show that for some
constant $B > 0$,
\[ \sum_{i: \alpha_i \leq 1} \alpha_i^2 + \sum_{i: \alpha_i > 1} \alpha_i \geq B k \log n.\]

\noindent We will present the argument for the case when $k=\sqrt{n}$;
the proof for other $k$ is similar, and focussing on this $k$ will
keep the notation and the constants simple. We will show that if the
second term in the LHS of the above inequality is small, say,
\[ \secondterm = \sum_{i: \alpha_i > 1} \alpha_i \leq \frac{1}{100} k \log n,
\]
then the first term must be large, i.e.,
\[ \firstterm = \sum_{i: \alpha_i \leq 1} \alpha_i^2 \geq \frac{1}{100} k \log n.\]

Assume $\secondterm \leq \frac{1}{100} k \log n$. Let us call a $G_i$
for which $\alpha_i > 1$ as {\em large} and a $G_i$ for which
$\alpha_i \leq 1$ as {\em small}. We start as in \cite{RT} by deleting
one of the sides of each large $G_i$ independently and uniformly at
random from the vertex set of $G$.  For a vertex $v \in V$, let $d_v$
be number of large $G_i$'s such that $v \in V_i$. The probability that
$v$ survives at the end of the random deletion is precisely
$2^{-d_v}$. Now,
\[ \sum_{v} d_v = \sum_{i: \alpha_i > 1} n_i  \leq \frac{1}{100} n \log n,\]
where the inequality follows from our assumption that $\secondterm
\leq \frac{1}{100} k \log n$. That is, the average value of $d_v$ is
$\frac{1}{100} \log n$, and by Markov's inequality, at least half of
the vertices have their $d_v$'s at most $d=\frac{1}{50} \log n$. We
focus on a set $V'$ of $n/2$ such vertices, and if they survive the
first deletion, we delete them again with probability
$1-2^{-(d-d_v)}$, so that every one of these $n/2$ vertices in $V'$
survives with probability exactly $2^{-d}= n^{-1/50}$.  Let $X$ be the
vertices of $V'$ that survive. Similarly, we define $W' \subseteq W$,
and let $Y \subseteq W'$ be the vertices that survive.

\begin{claim}
With probability $1-o(1)$, $|X|, |Y| \geq \frac{n}{4}2^{-d}$.
\end{claim}

The claim can be proved as follows. For $v \in V'$, let $I_v$ be the
indicator variable for the event that $v$ survives. Then, $\Pr[I_v=1]=
2^{-d} = n^{-1/50}$ for all $v \in V'$.  Furthermore, $I_v$ and $I_{v'}$ are
dependent precisely if there is a common large $G_i$ such that both
$v, v' \in V_i$. Thus, any one $I_v$ is dependent on at most $\Delta =
d_v \times \max\{n_i: \alpha_i > 1\} \leq (1/50) (\log n) n^{1/100}
(n/k) = (1/50) n^{51/100} \log n$ such events (recall $k =
\sqrt{n}$). We thus have (see Alon-Spencer~\cite{AS})
\begin{eqnarray*}
 \E[|X|] & = & \sum_{v \in V'} I_v =\frac{n}{2}2^{-d} = \frac{1}{2}n^{49/50}; \\ 
 \var[|X|] &\leq & E[|X|] \Delta.
\end{eqnarray*}
By Chebyshev's inequality, the probability that $|X|$ is 
less than $\frac{\E[|X|]}{2}$ is at most 
\[ \frac{4\var[X]}{\E[|X|]^2} \leq \frac{4 \Delta}{\E[|X|]} = o(1). \]
A similar calculation can be done for $|Y|$.  (End of Claim.)

The crucial consequence of our random deletion process is that no
large $G_i$ has any edge between $X$ and $Y$.  Since $G$ does not have
any independent set $(S,T)$ of size $k \times k$, the small $G_i$'s
must provide the necessary edges to avoid such independent sets
between $X$ and $Y$. Consider an edge $(v,w)$ of a small $G_i$. The
probability that this edge survives in $X \times Y$ is precisely the
probability of the event $I_v \wedge I_w$. Note that the two events
$I_v$ and $I_w$ are either independent (when $v$ and $w$ do not belong
to a common large $G_i$), or they are mutually exclusive. Thus, the
expected number of edges supplied between $X$ and $Y$ by small $G_i$'s
is at most
\[ \sum_{i: \alpha_i \leq 1} \alpha_i^2 (n/k)^2 2^{-2d} = \firstterm \times (n/k)^2 2^{-2d}, \]
and by Markov's inequality, with probability $1/2$ it is at most twice
its expectation. Using the Claim above we conclude that the following
three events happen simultaneously: (a) $|X| \geq \frac{n}{4} 2^{-d}$,
(b) $|Y| \geq \frac{n}{4} 2^{-d}$, (c) the number of edges conecting
$X$ and $Y$ is at most $2 \times \firstterm \times (n/k)^2
2^{-2d}$. Using (\ref{eq:kst-edges}), this number of edges must be at
least $\frac{1}{3} \frac{(n2^{-d})^2}{16k} (\frac{49}{50}\log n - 2)$.
(Note that $\frac{1}{3}$ suffices as the constant in
(\ref{eq:kst-edges}) for the case $|X|, |Y| \geq \frac{n^{49/50}}{4}$
and $k = \sqrt{n}$.)  Comparing the upper and lower bounds on the
number of edges thus established, we obtain the required inequality
\[ \firstterm \geq \frac{1}{100} k \log n. \]

\section{Building a Bipartite Graph from Smaller Asymmetric Bipartite Graphs}
\label{sec:asymmetric}

\paragraph*{\bf Proof of Theorem~\ref{thm:necessary.asymmetric} \\}

Let $k \in [n^{\frac{1}{10}},n^{\frac{9}{10}}]$ and $\alpha_i, \beta_i
\in [n^{-\frac{1}{100}},n^{\frac{1}{100}}]$, $i=1,2,\ldots,r$. Suppose
we are given a bipartite graph $G=(V,W,E)$ which is a union of
complete bipartite graphs $G_1,G_2,\ldots,G_r$ and has no bipartite
independent set of size $k \times k$, where $G_i=(V_i,W_i,E_i=V_i
\times W_i)$ with $|V_i| = m_i = \alpha_i (n/k)$ and $|W_i| = n_i =
\beta_i (n/k)$. As stated in Theorem~\ref{thm:necessary.asymmetric},
we let $p_i = \frac{\alpha_i}{\alpha_i + \beta_i}$ and $\ent(p_i) =
-p_i \log(p_i) - (1-p_i) \log(1-p_i)$. We wish to show that there is a
constant $C > 0$, such that
\[ \sum_{i \in X} \alpha_i \beta_i + \sum_{i \notin X} (\alpha_i + \beta_i) \ent(p_i) \geq C k \log n \]
for every $X \subseteq \{1,2,\ldots,r\}$.

The proof is similar to but more subtle than the proof of
Thereom~\ref{thm:necessary.symmetric} and again we present the
argument for the case when $k=\sqrt{n}$. Fix a subset $X \subseteq
\{1,2,\ldots,r\}$. Our plan is to assume that the second term in the
LHS of the above inequality is small,
\begin{equation}
 \secondterm = \sum_{i \notin X} (\alpha_i + \beta_i) \ent(p_i) \leq \frac{1}{100} k \log n, \label{eq:assumption-second}
\end{equation}
and from this conclude that the first term must be large,
\begin{equation}
\firstterm = \sum_{i \in X} \alpha_i \beta_i \geq \frac{1}{100} k \log n.
\label{eq:conclusion}
\end{equation}

Assume $\secondterm \leq \frac{1}{100} k \log n$. Graphs $G_i$ where $i
\in X$ will be called {\em marked} and graphs $G_i$ where $i \notin X$
will be called {\em unmarked}. As before, we will delete one of the
sides of each unmarked $G_i$ independently at random from the vertex
set of $G$.  However, since this time there are different number of
vertices on the two sides of $G_i$, we need to be more careful and
choose the deletion probabilities carefully. For every unmarked $G_i$
independently, we delete all the vertices in $W_i$ with probability
$p_i$ and all the vertices in $V_i$ with probability $1-p_i$.

For a vertex $v \in V$, let $S_v$ be the set of $i \notin X$ such that
$v \in V_i$. Define $d_v$ to be the quantity $\sum_{i \in S_v}
\log(1/p_i)$. The probability that $v$ survives the random deletion
process is $2^{-d_v}$. Using the fact that $p_i =
\frac{\alpha_i}{\alpha_i + \beta_i}$ and plugging the expression for
$\ent(p_i)$ in the assumption (\ref{eq:assumption-second}), we obtain
\[ \sum_{i \notin X} \left(\alpha_i \log (1/p_i) + \beta_i \log(1/(1-p_i)) \right) \leq \frac{1}{100} k \log n. \]
Multiplying both sides by $(n/k)$, this implies
\begin{equation}
\label{eq:left}
 \sum_{i \notin X} m_i \log (1/p_i)  \leq \frac{1}{100} n \log n, 
\end{equation}
and
\begin{equation}
\label{eq:right}
 \sum_{i \notin X} n_i \log (1/(1-p_i))  \leq \frac{1}{100} n \log n. 
\end{equation}
Since
\[ \sum_{v \in V} d_v = \sum_{i \notin X} m_i \log (1/p_i), \]
the average value of $d_v$ is at most $\frac{1}{100} \log n$, and by
Markov's inequality, at least $3n/4$ vertices $v \in V$ have their
$d_v$ at most $d =\frac{1}{25} \log n$. Moreover, since $\alpha_i,
\beta_i \in [n^{-1/100}, n^{1/100}]$, we have
\[ p_i \leq \frac{n^{1/100}}{n^{1/100} + n^{-1/100}}
       \leq 1 - \frac{n^{-1/100}}{n^{-1/100}+ n^{1/100}} 
       \leq \exp\left(- \frac{n^{-1/100}}{n^{1/100} + n^{-1/100}}\right),\]
and thus
\[
 \frac{1}{p_i}  \geq  \exp\left(\frac{n^{-1/100}}{n^{1/100} + n^{-1/100}}\right)  \geq  \exp\left(\frac{1}{2} n^{-1/50}\right).
\]
The above implies $\log(1/p_i) \geq \frac{1}{2} n^{-1/50}$, which
combined with (\ref{eq:left}) yields
\[ \sum_{i \notin X} m_i \leq \frac{1}{50} n^{51/50} \log n. \]
Since
\[ \sum_{v \in V} |S_v| = \sum_{i \notin X} m_i, \]
the average value of $|S_v|$ is at most $\frac{1}{50} n^{1/50} \log
n$, and again by Markov's inequality, at least $3n/4$ vertices $v \in
V$ satisfy $|S_v| \leq d' = \frac{4}{50} n^{1/50} \log n$.

We focus on a set $V'$ of $n/2$ vertices $v \in V$ such that $d_v \leq
d$ and $|S_v| \leq d'$. If any vertex $v \in V'$ survives the first
deletion, we delete it further with probability $1-2^{-(d-d_v)}$, so
that the survival probability of each vertex in $V'$ is exactly
$2^{-d}= n^{-1/25}$.  Let $X$ be the set of vertices in $V'$ that
survive. Similarly, we define $W' \subseteq W$, and let $Y$ be the set
of vertices in $W'$ that survive.

\begin{claim}
With probability $1-o(1)$, $|X|, |Y| \geq \frac{n}{4}2^{-d}$.
\end{claim}

The proof of the claim is exactly like the previous time. 
\junk{For $v \in V'$, we let $I_v$ be the indicator variable for the event
that $v$ survives in $X$. Then, $\Pr[I_v=1]= 2^{-d} = n^{-1/25}$ for
all $v \in V'$.  Furthermore, $I_v$ and $I_{v'}$ are dependent
precisely if there is a common unmarked $G_i$ such that both $v, v'
\in V_i$. Thus, any one $I_v$ is dependent on at most $\Delta = |S_v|
\times \max\{m_i: i \notin X\} \leq (4/50) n^{1/50} (\log n) n^{1/100}
(n/k) = (4/50) n^{53/100} \log n$ such events (recall $k =
\sqrt{n}$). Now we have
\begin{eqnarray*}
\E[|X|] & = & \sum_{v \in V'} I_v = \frac{n}{2}2^{-d} = \frac{1}{2}n^{24/25};\\
\var[|X|] & \leq & E[|X|] \Delta.
\end{eqnarray*}
By Chebyshev's inequality, the probability that $|X|$ is 
less than $\frac{\E[|X|]}{2}$ is at most 
\[ \frac{4\var[X]}{\E[|X|]^2} \leq \frac{4 \Delta}{\E[|X|]} = o(1). \]
A similar calculation can be done for $|Y|$.  (End of Claim.)}

Since no unmarked $G_i$ has any edge between $X$ and $Y$, the marked
$G_i$'s must provide enough edges to avoid all
independent sets of size $k \times k$ between $X$ and $Y$. As in the proof of
Theorem~\ref{thm:necessary.symmetric}, we can argue that an edge of a
marked $G_i$ survives in $X \times Y$ with probability at most $2^{-2d}$. Thus the
expected number of edges supplied between $X$ and $Y$ by marked
$G_i$'s is at most
\[ \sum_{i \in X} m_i n_i 2^{-2d} = \sum_{i \in X} \alpha_i \beta_i (n/k)^2 2^{-2d} = \firstterm \times (n/k)^2 2^{-2d}, \]
and by Markov's inequality with probability $1/2$ it is at most twice
its expectation. Thus the event where both $X$ and $Y$ are of size at
least $\frac{n}{4} 2^{-d}$ and the number of edges connecting them is
at most $2 \times \firstterm \times (n/k)^2 2^{-2d}$ occurs with
positive probability. From (\ref{eq:kst-edges}), this number of edges
must be at least $\frac{1}{3} \frac{(n2^{-d})^2}{16k}
(\frac{24}{25}\log n - 2)$. (Note that $\frac{1}{3}$ suffices as the
constant in (\ref{eq:kst-edges}) when $|X|, |Y| \geq
\frac{n^{24/25}}{4}$ and $k = \sqrt{n}$.) Thus we get
\[ \firstterm \geq \frac{1}{100} k \log n. \]

\section{Depth-Two Superconcentrators}
\label{sec:superconcentrator}

\begin{definition}[Depth-two superconcentrators]
Let $G=(V,M,W,E)$ be a graph with three sets of vertices $V$, $M$ and
$W$, where $|V|, |W|=n$, such that all edges in $E$ go from $V$ to $M$
or $M$ to $W$. Such a graph is called a depth-two
$n$-superconcentrator if for every $k \in \{1,2,\ldots,n\}$ and every
pair of subsets $S \subseteq V$ and $T \subseteq W$, each of size $k$,
there are $k$ vertex disjoint paths from $S$ to $T$.
\end{definition}

\noindent We reprove two known lower bounds for depth-two superconcentrators.

\subsection{Size of Depth-Two Superconcentrators}
\label{sec:superconcentrator-size}

\begin{theorem}[Radhakrishnan and Ta-Shma~\cite{RT}]
\label{thm:superconcentrator.lb}
If the graph $G(V,M,W,E)$ is a depth-two $n$-superconcentrator, then
$|E(G)| = \Omega(n \frac{(\log n)^2}{\log\log n})$.
\end{theorem}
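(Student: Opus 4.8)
The plan is to reduce the superconcentrator lower bound to the necessary condition for $(n,k)$-strong sequences (Theorem~\ref{thm:necessary.symmetric}), applied over a geometric range of values of $k$. The standard observation (going back to Pippenger and used in~\cite{RT}) is that in a depth-two superconcentrator $G=(V,M,W,E)$, each middle vertex $u \in M$ with $a_u$ edges to $V$ and $b_u$ edges to $W$ induces a complete bipartite graph $K_{a_u,b_u}$ between its $V$-neighborhood and its $W$-neighborhood; moreover, for a fixed $k$, after discarding middle vertices of very high degree (more than $n/k$ on either side — there are few of these, and removing their incident edges is affordable), the surviving complete bipartite graphs must, for \emph{every} pair $S \subseteq V$, $T \subseteq W$ of size $k$, contain an edge between $S$ and $T$: otherwise there could be no $k$ vertex-disjoint $S$–$T$ paths through the remaining middle layer, since all paths have length two and the high-degree vertices alone cannot carry $k$ disjoint paths. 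In other words, the multiset of pairs $(\min(a_u,n/k),\min(b_u,n/k))$ gives an $(n,k)$-strong sequence (in the symmetric-after-symmetrization sense, or more precisely one applies the asymmetric Theorem~\ref{thm:necessary.asymmetric}; but for the edge bound the symmetric version suffices after noting $a_u b_u \le \max(a_u,b_u)^2$).

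\textbf{First} I would fix the target $k$ and set up the clean-up step: let $M_k \subseteq M$ be the middle vertices with $a_u \le n/k$ and $b_u \le n/k$. The vertices outside $M_k$ number at most $2|E|/(n/k) = 2k|E|/n$, so the number of $S$–$T$ pairs they can help route is limited; a short argument (or simply: delete them and lose at most $O(k|E|/n)$ from the connectivity, which is negligible when we later choose $k$) shows the graphs $\{K_{a_u,b_u} : u \in M_k\}$ form an $(n,k')$-strong configuration for $k' = k - O(k|E|/n)$, and for $|E| = o(n \log n)$ (which we may assume, else we are done) this $k'$ is $\Omega(k)$. \textbf{Then}, writing $n_u = \max(a_u,b_u) \le n/k$ and $\alpha_u = n_u/(n/k) \le 1$, Theorem~\ref{thm:necessary.symmetric} gives $\sum_{u \in M_k} \alpha_u^2 = \Omega(k \log n)$, i.e. $\sum_{u} \max(a_u,b_u)^2 = \Omega\!\left(\frac{n^2}{k}\log n\right)$. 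We need $k \in [n^{1/10}, n^{9/10}]$ for the theorem, and also $\alpha_u \in [n^{-1/100}, n^{1/100}]$; the upper bound holds by construction, and $\alpha_u \ge n^{-1/100}$ can be arranged by merging or padding tiny stars, or by observing that stars with $a_u, b_u$ below $(n/k)n^{-1/100}$ contribute so few edges in total that they cannot route a single generic $S$–$T$ pair and may be discarded.

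\textbf{Next} comes the summation over scales, which is the heart of the $(\log n)^2/\log\log n$ bound. For each $u$, let $n_u = \max(a_u, b_u)$; then $a_u + b_u \ge n_u$, so $|E| = \sum_u (a_u + b_u) \ge \sum_u n_u$. Applying the bound at scale $k = 2^j$ for $j$ ranging over $[\tfrac{1}{10}\log n, \tfrac{9}{10}\log n]$ in steps where the dyadic intervals partition the possible values of $n_u$: a vertex with $n_u$ of order $n/k$, i.e. $n_u \approx n 2^{-j}$, contributes $\approx n_u^2 (k/n)^2 = 1$ to the sum $\sum \alpha_u^2$ at scale $k=2^j$. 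So at each scale we learn that the number of middle vertices with $n_u \in (\tfrac{n}{2k}, \tfrac{n}{k}]$ is $\Omega(k \log n)$ \emph{unless} larger vertices at smaller scales already account for the $\Omega(k\log n)$ budget. The usual way to combine these (as in~\cite{RT}) is: if $|E| = \sum_u n_u \le \epsilon \, n \frac{(\log n)^2}{\log\log n}$, then on average over the $\Theta(\log n / \log\log n)$ scales spaced by a factor of $\log n$, too few edges are available at some scale $k^\ast$, so the vertices with $n_u \le n/k^\ast$ carry almost all the edge weight but by the KST-type bound at scale $k^\ast$ they must carry $\Omega(\frac{n^2}{k^\ast}\log n)$ worth of $\sum n_u^2$; converting back, using that these $n_u \le n/k^\ast$ forces $\sum n_u \ge \frac{k^\ast}{n}\sum n_u^2 / \max_u \alpha_u \cdot \ldots$ — more cleanly, one shows $\sum_{u: n_u \le n/k} n_u \ge \Omega\!\big(n \log n \big)$ for \emph{each} such $k$ after the clean-up, and then an averaging/telescoping over the $\Theta(\log n/\log\log n)$ geometrically spaced scales, noting each vertex is counted in at most $O(\log n / \log\log n)$ of them if scales are spaced by $\log n$... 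I would instead follow~\cite{RT} precisely: choose scales $k_0 < k_1 < \cdots$ with $k_{t+1} = k_t \log n$, about $\log n/\log\log n$ of them; the weight $\sum_u n_u \mathbf{1}[n_u \le n/k_t]$ is $\Omega(n\log n)$ at every $t$ by the above, and since a vertex with a given $n_u$ lies below threshold $n/k_t$ for all sufficiently large $t$ but the thresholds jump by factors of $\log n$, summing the $\Omega(n \log n)$ lower bounds over the $\Theta(\log n / \log\log n)$ scales and dividing by the $O(1)$ overcount per vertex yields $|E| = \sum_u n_u = \Omega\!\left(n \frac{(\log n)^2}{\log\log n}\right)$.

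\textbf{The main obstacle} I anticipate is not any single inequality but getting the clean-up and the scale-combination to interlock: one must verify that discarding high-degree middle vertices at scale $k$ (to force $a_u, b_u \le n/k$) and tiny middle vertices (to force $\alpha_u \ge n^{-1/100}$) degrades the connectivity parameter only from $k$ to $\Omega(k)$, and this degradation bound itself depends on $|E|$, so the argument is mildly circular and must be run as a contradiction: assume $|E| = o(n (\log n)^2/\log\log n)$, perform all clean-ups (each losing $o(k)$ in connectivity because $|E|$ is small), apply Theorem~\ref{thm:necessary.symmetric} at each scale, and derive $|E| = \Omega(n(\log n)^2/\log\log n)$, a contradiction. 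The bookkeeping of constants across $\Theta(\log n/\log\log n)$ scales, and confirming the restriction $k \in [n^{1/10}, n^{9/10}]$ does not cost more than constant factors in the final bound (it doesn't, since $[n^{1/10},n^{9/10}]$ spans a constant fraction of all dyadic scales), is the part that needs care but no new ideas.
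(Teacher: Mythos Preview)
Your proposal has the right skeleton but two interlocking gaps, both of which the paper closes with a single device you are missing: a three-way $\High/\Medium/\Low$ split of the middle vertices at each scale $k$. First, your clean-up threshold $n/k$ is too aggressive. The number of middle vertices with $\max(a_u,b_u) > n/k$ is at most $2k|E|/n$, and for at least one of the $k$ vertex-disjoint $S$--$T$ paths to survive you need this count to be below $k$, i.e.\ $|E| < n/2$---far stronger than the contradiction hypothesis. Your claim ``for $|E|=o(n\log n)$, $k' = k - O(k|E|/n)$ is $\Omega(k)$'' is simply false: $k - o(k\log n)$ is negative. The paper instead discards only $\High(k)=\{v:\deg(v)\ge (n/k)(\log n)^2\}$, and $|\High(k)|<k$ then follows from the mild assumption $|E|<n(\log n)^2$. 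With this higher threshold the surviving $\alpha_v$ range up to $(\log n)^2$, so both terms of Theorem~\ref{thm:necessary.symmetric} are in play, not just the $\alpha_v^2$ term.

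Second, and more damaging, your scale summation does not have $O(1)$ overcount. The sets $\{u: n_u \le n/k_t\}$ are \emph{nested} as $k_t$ grows, so a vertex with small $n_u$ is counted at every one of the $\Theta(\log n/\log\log n)$ scales; dividing by the true overcount leaves only $|E|=\Omega(n\log n)$. The paper repairs this by proving that the band $\Medium(k)=\{v:(n/k)(\log n)^{-2}\le\deg(v)<(n/k)(\log n)^2\}$ alone carries $\Omega(n\log n)$ edges; bands for $k$-values spaced by $(\log n)^4$ are disjoint, so summing over $\Theta(\log n/\log\log n)$ scales is clean. To show $\Medium(k)$ dominates requires an extra step you do not have: if instead $\Low(k)$ (where $\alpha_v<(\log n)^{-2}$) carried half the weight in the inequality of Theorem~\ref{thm:necessary.symmetric}, then since $\alpha_v > (\log n)^2 \alpha_v^2$ on $\Low(k)$ one gets $\sum_{v\in\Low(k)}\deg(v) = (n/k)\sum\alpha_v \ge (n/k)(\log n)^2\sum\alpha_v^2 = \Omega(n(\log n)^3)$, contradicting the edge assumption. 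Hence $\Medium(k)$ must supply $\Omega(k\log n)$ to the $\alpha$-sum, which is $\Omega(n\log n)$ edges, and the disjoint-band summation finishes.
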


\begin{proof}
Assume that the number of edges in a depth-two $n$-superconcentrator
$G$ is at most $(B/100) n \frac{(\log n)^2}{\log\log n}$, where $B$ is
the constant in Theorem~\ref{thm:necessary.symmetric}. By increasing
the number of edges by a factor at most two, we assume that each
vertex in $M$ has the same number of edges coming from $V$ and going
to $W$. For a vertex $v \in M$, let $\deg(v)$ denote the number of
edges that come from $V$ to $v$ (equivalently the number of edges that
go from $v$ to $W$). For $k \in [n^{1/4}, n^{3/4}]$, define
\begin{eqnarray*}
\High(k)   &=& \{v \in M: \deg(v) \geq \frac{n}{k} (\log n)^2 \}; \\
\Medium(k) &=& \{v \in M: \frac{n}{k} (\log n)^{-2} \leq \deg(v) 
                          < \frac{n}{k} (\log n)^2 \}; \\
\Low(k)    &=& \{v \in M: \deg(v) < \frac{n}{k} (\log n)^{-2} \}.
\end{eqnarray*}

\begin{claim}
For each $k \in [n^{1/4}, n^{3/4}]$, the number of edges incident on
$\Medium(k)$ is at least $\frac{B}{2} n \log n$.
\end{claim}
Fix a $k \in [n^{1/4},n^{3/4}]$. First observe that $|\High(k)| < k$,
for otherwise, the number of edges in $G$ would already exceed $n
(\log n)^2$, contradicting our assumption.  Thus, every pair of
subsets $S \subseteq V$ and $T \subseteq W$ of size $k$ each has a
common neighbour in $\Medium(k) \cup \Low(k)$. We are now in a
position to move to the setting of
Theorem~\ref{thm:necessary.symmetric}. For each vertex $v \in
\Medium(k) \cup \Low(k)$, consider the complete bipartite graph
between its in-neighbours in $V$ and out-neighbours in $W$.  The
analysis above implies that the union of these graphs is a bipartite
graph between $V$ and $W$ that has no independent set of size $k
\times k$. For $v \in \Medium(k) \cup \Low(k)$, let $\alpha_v =
\frac{\deg(v)}{n/k}$. Using Theorem~\ref{thm:necessary.symmetric}, it
follows that
\begin{equation}
\sum_{v \in \Medium(k) \cup \Low(k): \alpha_v \leq 1} \alpha_v^2 + \sum_{v \in \Medium(k) \cup \Low(k): \alpha_v > 1} \alpha_v \geq B k \log n.
\end{equation}
For $\alpha_v \leq 1$, $\alpha_v^2 \leq \alpha_v$ and thus we can
replace $\alpha_v^2$ by $\alpha_v$ when $(\log n)^{-2} \leq
\alpha_v \leq 1$ and conclude
\begin{equation}
\label{eq:fixedk}
\sum_{v \in \Low(k)} \alpha_v^2 + \sum_{v \in \Medium(k)} \alpha_v \geq B k \log n.
\end{equation}
One of the two terms in the LHS is at least half the RHS. If
it is the first term then noting that $\alpha_v < (\log n)^{-2}$
for all $v \in \Low(k)$, we obtain
\[ \sum_{v \in \Low} \deg(v) = \frac{n}{k} \sum_{v \in \Low} \alpha_v 
\geq \frac{n}{k} (\log n)^2 \sum_{v \in \Low} \alpha_v^2 \geq \frac{B}{2} n(\log n)^3. \] 
Since the left hand side is precisely the number of
edges entering $\Low(k)$, this contradicts our assumption that $G$ has
few edges. So, it must be that the second term in the LHS of
(\ref{eq:fixedk}) is at least $\frac{B}{2} k \log n$. Then, the number
of edges incident on $\Medium(k)$ is
\[ \sum_{v \in \Medium} \deg(v) = \frac{n}{k} \sum_{v \in \Medium} \alpha_v \geq \frac{B}{2} n \log n. \]
This completes the proof of the claim.

Now, consider values of $k$ of the form $n^{1/4} (\log n)^{4i}$ in the
range $[n^{1/4}, n^{3/4}]$. Note that there are at least
$(\frac{1}{10})\log n / \log \log n$ such values of $k$ and the sets
$\Medium(k)$ for these values of $k$ are disjoint. By the claim above,
each such $\Medium(k)$ has at least $\frac{B}{2} n \log n$ edges
incident on it, that is $G$ has a total of at least $\frac{B}{20} n
\frac{(\log n)^2}{\log \log n}$ edges, again contradicting our
assumption.
\end{proof}

\subsection{Tradeoff in Depth-Two Superconcentrators}
\label{sec:superconcentrator-tradeoff}

\begin{theorem}[Dutta and Radhakrishnan~\cite{DR}]
\label{thm:superconcentrator.tradeoff}
If the graph $G=(V,M,W,E)$ is a depth-two $n$-superconcentrator with
average degree of nodes in $V$ and $W$ being $a$ and $b$ respectively
and $a \leq b$, then
\[ a \log\left(\frac{a + b}{a}\right) \log b = \Omega(\log^2 n). \]
\end{theorem}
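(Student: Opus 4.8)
The plan is to run the argument of Theorem~\ref{thm:superconcentrator.lb}, using Theorem~\ref{thm:necessary.asymmetric} in place of Theorem~\ref{thm:necessary.symmetric}. The reduction is the same: with each middle vertex $v\in M$ associate the complete bipartite graph between its neighbours in $V$ and its neighbours in $W$, a copy of $K_{d^-(v),d^+(v)}$. Since $G$ is a depth-two superconcentrator, for every $k$ and every $S\subseteq V$, $T\subseteq W$ of size $k$ there are $k$ vertex-disjoint $S$-to-$T$ paths, each through one middle vertex, so the union of these graphs has no $k\times k$ bipartite independent set; and since deleting fewer than $k$ middle vertices still leaves one of the $k$ disjoint paths, this stays true after removing any fewer than $k$ middle vertices. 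Writing $\alpha_v=d^-(v)/(n/k)$, $\beta_v=d^+(v)/(n/k)$, note that the quantity $(\alpha_v+\beta_v)\ent(p_v)=\alpha_v\log\tfrac{\alpha_v+\beta_v}{\alpha_v}+\beta_v\log\tfrac{\alpha_v+\beta_v}{\beta_v}$ of a middle vertex with $\alpha_v\approx a$, $\beta_v\approx b$ is $\Theta\!\big(a\log\tfrac{a+b}{a}\big)$ (using $a\le b$, so $\beta_v\log\tfrac{\alpha_v+\beta_v}{\beta_v}\le\alpha_v$, while $\alpha_v\log\tfrac{\alpha_v+\beta_v}{\alpha_v}\ge\alpha_v\log 2$) --- exactly the per-vertex weight that the target inequality counts.

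Two regimes are disposed of up front. If $a\ge(\log n)^2$ then $b\ge a$, so $\log b=\Omega(\log\log n)$, and $a\log\tfrac{a+b}{a}\ge a\log 2$ already makes the inequality hold; and if $b\ge n^{c}$ for a suitable constant $c$ then, since $a\le(\log n)^2$, both $\log b$ and $\log\tfrac{a+b}{a}$ are $\Omega(\log n)$, so the inequality holds with room to spare. Hence we may assume $a\le(\log n)^2$ and $\log b=O(\log n)$.

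For the main argument (fix $k=\sqrt n$ for concreteness; other $k$ go exactly as in Theorem~\ref{thm:superconcentrator.lb}) I would work with a geometric family of scales. At each scale, call a surviving middle vertex \emph{high} if its in- or out-degree exceeds the relevant threshold --- fewer than $k$ such, so discard them --- \emph{medium} if both degrees lie in the chosen windows around the $a$- and $b$-scales, and \emph{low} otherwise. Apply Theorem~\ref{thm:necessary.asymmetric} to the survivors with the marked set $X$ chosen so that a medium vertex contributes its entropy weight $(\alpha_v+\beta_v)\ent(p_v)$ rather than $\alpha_v\beta_v$; after peeling off the contribution of the low vertices this should give $|\mathrm{Med}|\cdot\Theta\!\big(a\log\tfrac{a+b}{a}\big)\gtrsim k\log n$, hence $\Omega\!\big(n\log n/\log\tfrac{a+b}{a}\big)$ edges on the $V$-side incident to $\mathrm{Med}$ at that scale. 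Summing over the disjoint medium classes across the scales --- the relevant scales span an out-degree range of size roughly $b$, which is where the extra factor $\log b$ enters --- and comparing with $\sum_v d^-(v)=na$ (and symmetrically with $\sum_v d^+(v)=nb$) should yield $a\log\tfrac{a+b}{a}\,\log b=\Omega(\log^2 n)$.

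The step I expect to be the obstacle, and the genuine difference from Theorem~\ref{thm:superconcentrator.lb}, is controlling the \emph{low} (degree-imbalanced) survivors. In the symmetric proof a low vertex contributes $\alpha_v^2\le(\log n)^{-2}\alpha_v$, a built-in polylogarithmic gap against its own edge count, so the low class is discarded outright. Here the corresponding contribution is the entropy weight, and for a vertex with tiny $d^-(v)$ but moderate $d^+(v)$ it still carries a non-vanishing $\log\tfrac{\alpha_v+\beta_v}{\alpha_v}$ factor, while such vertices can be as numerous as the $V$-side edge budget allows. So one cannot afford wide (polylogarithmic) medium windows --- they would swamp the final bound --- yet narrow windows make the low class harder to dismiss. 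The way through has to charge a large low contribution against both edge budgets $na$ and $nb$ simultaneously (an imbalanced vertex that is cheap in one coordinate's entropy is expensive in the same coordinate's degree count), keep the medium windows of only constant multiplicative width so that enough scales remain without an extra $\log\log n$ loss, and fall back on the two trivial regimes above whenever this charging would otherwise break. Making these bookkeeping choices mutually consistent is the heart of the argument.
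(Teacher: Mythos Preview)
Your high-level plan---reduce to Theorem~\ref{thm:necessary.asymmetric} and run a scales argument as in Theorem~\ref{thm:superconcentrator.lb}---matches the paper, and you correctly identify the obstacle: without further structure, degree-imbalanced middle vertices make the low class hard to discard. But the paper resolves this obstacle with a single preprocessing step that you do not have, and without it your plan does not close.

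The paper's first move is to \emph{normalize the degree ratio}: by adding edges (at most doubling each side's budget), force $\deg_V(v)/\deg_W(v)=a/b$ for every $v\in M$. Now $p_v=a/(a+b)$ is the same for all $v$, so the entropy weight of every middle vertex is exactly $\alpha_v\log\frac{a+b}{a}+\beta_v\log\frac{a+b}{b}=\Theta\bigl(\alpha_v\log\frac{a+b}{a}\bigr)$ uniformly (the second summand is at most $\alpha_v/\ln 2$). The classification then collapses to one dimension---say $\deg_W(v)$---with windows of multiplicative width $b^{\pm 2}$, \emph{not} constant width. Low vertices satisfy $\beta_v<b^{-2}$; putting them in the marked set $X$, if their contribution $\sum\alpha_v\beta_v$ exceeds $\tfrac{D}{2}k\log n$ then $\sum_{\Low}\alpha_v>b^2\cdot\tfrac{D}{2}k\log n$, so the number of $V$-edges into $\Low$ exceeds $\tfrac{D}{2}nb^2\log n\ge\tfrac{D}{2}n(\log n)^3$ (using $b>\log n$), forcing $a\ge\tfrac{D}{2}(\log n)^3$ and the theorem is trivial. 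The window width $b^{4}$ is exactly what makes the number of disjoint scales $\Theta(\log n/\log b)$; picking one scale whose $\Medium$ receives at most the average share of $V$-edges then yields the factor $\log b$ directly.

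Your two-dimensional medium class (``both degrees in their windows'') and constant-width windows are precisely what the normalization lets the paper avoid. Constant-width windows leave only an $O(1)$ gap between the low-class $\alpha_v\beta_v$ term and the edge budgets, so the low class cannot be discarded; and without a fixed ratio, a medium vertex need not have entropy weight comparable to $\alpha_v\log\frac{a+b}{a}$, so you cannot factor the global $\log\frac{a+b}{a}$ out of the sum. Your proposed fix---charging simultaneously against both budgets $na$ and $nb$---is in spirit what the ratio normalization accomplishes in one line, but as stated it is a hope rather than an argument.
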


\begin{proof}
We may assume that $b > \log n$, otherwise the total number of edges
in $G$ is at most $2n\log n$ which contradicts
Theorem~\ref{thm:superconcentrator.lb} proved earlier. We may also
assume that $b < n^{\frac{1}{10}}$, otherwise the theorem can be
easily seen to be true.  For a vertex $v \in M$, let $\deg_V(v)$
denote the number of edges that come from $V$ to $v$ and $\deg_W(v)$
denote the number of edges that go from $v$ to $W$. We will assume
that the ratio $\frac{\deg_V(v)}{\deg_W(v)}$ is equal to $\frac{a}{b}$
for each vertex $v \in M$. This is without loss of generality as we
can make the ratio $\frac{\deg_V(v)}{\deg_W(v)}$ equal to
$\frac{a}{b}$ by increasing the number of edges from $V$ to $v$ (if
$\frac{\deg_V(v)}{\deg_W(v)}$ is smaller) or increasing the number of
edges from $v$ to $W$ (if the ratio is larger), and this process does
not increase the number of edges between $V$ and $M$ or between $M$
and $W$ more than by a factor two. (We ignore the rounding issues as
they are not important.)

For $k \in [n^{1/4}, n^{3/4}]$, define
\begin{eqnarray*}
\High(k)   &=& \{v \in M: \deg_W(v) \geq \frac{n}{k} b^2 \}; \\
\Medium(k) &=& \{v \in M: \frac{n}{k} b^{-2} \leq \deg_W(v) 
                          < \frac{n}{k} b^2 \}; \\
\Low(k)    &=& \{v \in M: \deg_W(v) < \frac{n}{k} b^{-2} \}.
\end{eqnarray*}

We consider values of $k$ of the form $n^{1/4} b^{4i}$ in the range
$[n^{1/4},n^{3/4}]$.  There are at least $L = \frac{\log n}{10 \log
  b}$ such values of $k$ and the sets $\Medium(k)$ for these values of
$k$ are disjoint. Thus out of these
values of $k$, we can find one, say $k_0$, such that the number
of edges from $V$ to $\Medium(k_0)$ is at most $\frac{an}{L}$. 

We observe that $|\High(k_0)| < k_0$, otherwise the number of edges
between $M$ and $W$ would be at least $b^2 n > bn$ which is a
contradition. Thus every pair of subsets $S \subseteq V$ and $T
\subseteq W$ of size $k_0$ each has a common neighbour in
$\Medium(k_0) \cup \Low(k_0)$. For each vertex $v \in \Medium(k_0)
\cup \Low(k_0)$, consider the complete bipartite graph between the
in-neighbours and out-neighbours of $v$.  The union of these graphs is
a bipartite graph between $V$ and $W$ that has no independent set of
size $k \times k$. For $v \in \Medium(k_0) \cup \Low(k_0)$, let
$\alpha_v = \frac{\deg_V(v)}{n/k_0}$ and $\beta_v =
\frac{\deg_W(v)}{n/k_0}$. It follows from
Theorem~\ref{thm:necessary.asymmetric} that
\begin{equation}
\label{eq:averagek}
\sum_{v\in \Low(k_0)} \alpha_v \beta_v + \sum_{v \in \Medium(k_0)} (\alpha_v + \beta_v) \ent(\frac{\alpha_v}{\alpha_v + \beta_v}) \geq C k_0 \log n,
\end{equation}
where $C$ is the constant from Theorem~\ref{thm:necessary.asymmetric}.
One of the two terms in the LHS is at least half the RHS.
If it is the first term, noting that $\beta_v < b^{-2}$
for all $v \in \Low(k_0)$, we obtain
\[ \sum_{v \in \Low(k_0)} \deg_V(v) =  \frac{n}{k_0} \sum_{v \in \Low(k_0)} \alpha_v 
\geq \frac{n}{k_0} b^2 \sum_{v \in \Low(k_0)} \alpha_v \beta_v 
> \frac{C}{2} n (\log n)^3, \]
as $b > \log n$. Since the left hand side is precisely the number of edges entering
$\Low(k_0)$, we get $a > \frac{C}{2} (\log n)^3$ which proves the
theorem.  If the second term in the LHS of (\ref{eq:averagek}) is at
least $\frac{C}{2} k_0 \log n$, we get
\[ \sum_{v \in \Medium(k_0)} (\alpha_v + \beta_v) \ent(\frac{\alpha_v}{\alpha_v + \beta_v}) \geq \frac{C}{2} k_0 \log n. \]
Simplifying we get
\[ \sum_{v \in \Medium(k_0)} \left(\alpha_v \log\left(\frac{\alpha_v + \beta_v}{\alpha_v}\right) 
   + \beta_v \log \left(\frac{\alpha_v + \beta_v}{\beta_v}\right) \right) \geq \frac{C}{2} k_0 \log n. \]
We know that $\left(\frac{\alpha_v + \beta_v}{\beta_v}\right)^{\beta_v} = \left(1 + \frac{\alpha_v}{\beta_v}\right)^{\beta_v} \leq \exp(\alpha_v)$,
which means we have that $\beta_v \log \left(\frac{\alpha_v + \beta_v}{\beta_v}\right) \leq \frac{\alpha_v}{\ln 2}$.
Noting $\frac{\alpha_v + \beta_v}{\alpha_v} = \frac{a + b}{a}$, we have
\[ \sum_{v \in \Medium(k_0)} \alpha_v \left( \log\left(\frac{a + b}{a}\right) + \frac{1}{\ln 2} \right) \geq \frac{C}{2} k_0 \log n. \]
Since $a \leq b$, $\frac{a + b}{a} \geq 2$ and we conclude
\[ \sum_{v \in \Medium(k_0)} \alpha_v \log\left(\frac{a + b}{a} \right) = \Omega(k_0 \log n). \]
The number of edges from $V$ to $\Medium(k_0)$ is precisely 
$\sum_{v \in \Medium(k_0)} \deg_V(v) = (n/k_0) \sum_{v \in \Medium(k_0)} \alpha_v$,  
which is at most $\frac{an}{L}$. Thus
\[ \frac{an}{L} \log \left(\frac{a + b}{a}\right) \geq (n/k_0) \sum_{v \in \Medium(k_0)} \alpha_v \log \left(\frac{a + b}{a}\right) = \Omega(n \log n). \]
Plugging $L = \frac{\log n}{10 \log b}$, we get
\[ a \log\left(\frac{a + b}{a}\right) \log b = \Omega(\log^2 n). \]
\end{proof}

\bibliographystyle{alpha}
\bibliography{zarankiewicz}

\end{document}